\PassOptionsToPackage{unicode}{hyperref}
\PassOptionsToPackage{hyphens}{url}
\PassOptionsToPackage{dvipsnames,svgnames,x11names}{xcolor}
\documentclass[
  authoryear,
  preprint]{elsarticle}
\usepackage{xcolor}
\usepackage{amsmath,amssymb}
\usepackage{iftex}
\ifPDFTeX
  \usepackage[T1]{fontenc}
  \usepackage[utf8]{inputenc}
  \usepackage{textcomp} % provide euro and other symbols
\else % if luatex or xetex
  \usepackage{unicode-math} % this also loads fontspec
  \defaultfontfeatures{Scale=MatchLowercase}
  \defaultfontfeatures[\rmfamily]{Ligatures=TeX,Scale=1}
\fi
\usepackage{lmodern}
\ifPDFTeX\else
\fi
\IfFileExists{upquote.sty}{\usepackage{upquote}}{}
\IfFileExists{microtype.sty}{% use microtype if available
  \usepackage[]{microtype}
  \UseMicrotypeSet[protrusion]{basicmath} % disable protrusion for tt fonts
}{}
\makeatletter
\@ifundefined{KOMAClassName}{% if non-KOMA class
  \IfFileExists{parskip.sty}{%
    \usepackage{parskip}
  }{% else
    \setlength{\parindent}{0pt}
    \setlength{\parskip}{6pt plus 2pt minus 1pt}}
}{% if KOMA class
  \KOMAoptions{parskip=half}}
\makeatother
\makeatletter
\ifx\paragraph\undefined\else
  \let\oldparagraph\paragraph
  \renewcommand{\paragraph}{
    \@ifstar
      \xxxParagraphStar
      \xxxParagraphNoStar
  }
  \newcommand{\xxxParagraphStar}[1]{\oldparagraph*{#1}\mbox{}}
  \newcommand{\xxxParagraphNoStar}[1]{\oldparagraph{#1}\mbox{}}
\fi
\ifx\subparagraph\undefined\else
  \let\oldsubparagraph\subparagraph
  \renewcommand{\subparagraph}{
    \@ifstar
      \xxxSubParagraphStar
      \xxxSubParagraphNoStar
  }
  \newcommand{\xxxSubParagraphStar}[1]{\oldsubparagraph*{#1}\mbox{}}
  \newcommand{\xxxSubParagraphNoStar}[1]{\oldsubparagraph{#1}\mbox{}}
\fi
\makeatother

\usepackage{longtable,booktabs,array}
\usepackage{calc} % for calculating minipage widths
\usepackage{etoolbox}
\makeatletter
\patchcmd\longtable{\par}{\if@noskipsec\mbox{}\fi\par}{}{}
\makeatother
\IfFileExists{footnotehyper.sty}{\usepackage{footnotehyper}}{\usepackage{footnote}}
\makesavenoteenv{longtable}
\usepackage{graphicx}
\makeatletter
\newsavebox\pandoc@box
\newcommand*\pandocbounded[1]{% scales image to fit in text height/width
  \sbox\pandoc@box{#1}%
  \Gscale@div\@tempa{\textheight}{\dimexpr\ht\pandoc@box+\dp\pandoc@box\relax}%
  \Gscale@div\@tempb{\linewidth}{\wd\pandoc@box}%
  \ifdim\@tempb\p@<\@tempa\p@\let\@tempa\@tempb\fi% select the smaller of both
  \ifdim\@tempa\p@<\p@\scalebox{\@tempa}{\usebox\pandoc@box}%
  \else\usebox{\pandoc@box}%
  \fi%
}
\def\fps@figure{htbp}
\makeatother

\usepackage[]{natbib}
\usepackage{booktabs}
\usepackage{longtable}
\usepackage{array}
\usepackage{multirow}
\usepackage{wrapfig}
\usepackage{float}
\usepackage{colortbl}
\usepackage{pdflscape}
\usepackage{tabu}
\usepackage{threeparttable}
\usepackage{threeparttablex}
\usepackage[normalem]{ulem}
\usepackage{makecell}
\usepackage{xcolor}
\usepackage{booktabs} \usepackage{longtable} \usepackage{placeins} \usepackage{setspace}  \usepackage{hyperref} \hypersetup{ colorlinks=true, linkcolor=blue, citecolor=blue, urlcolor=blue } \usepackage{amsthm} \newtheorem{assumption}{Assumption} \newtheorem{proposition}{Proposition} \newtheorem{definition}{Definition}  \newtheorem{corollary}{Corollary} \newpageafter{abstract}
\makeatletter
\@ifpackageloaded{caption}{}{\usepackage{caption}}
\AtBeginDocument{%
\ifdefined\contentsname
  \renewcommand*\contentsname{Table of contents}
\else
  \newcommand\contentsname{Table of contents}
\fi
\ifdefined\listfigurename
  \renewcommand*\listfigurename{List of Figures}
\else
  \newcommand\listfigurename{List of Figures}
\fi
\ifdefined\listtablename
  \renewcommand*\listtablename{List of Tables}
\else
  \newcommand\listtablename{List of Tables}
\fi
\ifdefined\figurename
  \renewcommand*\figurename{Figure}
\else
  \newcommand\figurename{Figure}
\fi
\ifdefined\tablename
  \renewcommand*\tablename{Table}
\else
  \newcommand\tablename{Table}
\fi
}
\@ifpackageloaded{float}{}{\usepackage{float}}
\floatstyle{ruled}
\@ifundefined{c@chapter}{\newfloat{codelisting}{h}{lop}}{\newfloat{codelisting}{h}{lop}[chapter]}
\floatname{codelisting}{Listing}

\makeatother
\makeatletter
\@ifpackageloaded{caption}{}{\usepackage{caption}}
\@ifpackageloaded{subcaption}{}{\usepackage{subcaption}}
\makeatother
\journal{Economics of Education Review}
\usepackage{bookmark}
\IfFileExists{xurl.sty}{\usepackage{xurl}}{} % add URL line breaks if available
\hypersetup{
  pdftitle={Partial Identification of Causal Effects that Vary by Setting},
  pdfauthor={Nick Huntington-Klein},
  colorlinks=true,
  linkcolor={blue},
  filecolor={Maroon},
  citecolor={Blue},
  urlcolor={Blue},
  pdfcreator={LaTeX via pandoc}}

\begin{document}

\begin{frontmatter}
\title{Partial Identification of Causal Effects that Vary by Setting}
\author[1]{Nick Huntington-Klein%
}
 \ead{nhuntington-klein@seattleu.edu} 

\affiliation[1]{organization={Seattle University, Department of
Economics},city={Seattle},country={United States of
America},countrysep={,},postcodesep={}}

\cortext[cor1]{Corresponding author}

\begin{abstract}
The estimation of causal effects using quasiexperiments often relies on
the use of unusual or serendipitous sources of exogenous variation. When
the goal is estimating the same causal effects across many different
settings, the same unusual exogenous variation often does not exist in
all settings, and the only available form of identification is
selection-on-observables, which relies on a conditional independence
assumption. Partial identification is especially valuable in this
context, as it allows conditional independence to not hold perfectly.
This paper proposes a method that sharpens the jointly identified set of
causal effects across many settings by making use of unobserved
relationships between omitted variable biases across settings.
\end{abstract}

\end{frontmatter}

\section{Introduction}\label{introduction}

In the applied social sciences, researchers often want to estimate and
report ``the effect'' of a given policy or treatment, with a causal
interpretation. ``The effect'' is a knowingly fictional estimand, as it
implies that a given policy or treatment has only one effect.
Researchers generally acknowledge that a given policy is likely to have
many different effects depending on location or the population being
treated \citep{huntington2021effect}. As a result, there is a large and
growing literature on ways to estimate heterogeneous treatment effects,
or ``conditional average treatment effects'' (CATE), which are effects
that are allowed to vary across the population based on observed
criteria like setting or demographics \citep[e.g.,][among many
others]{wager2018estimation}.

However, there is an inherent limitation in the application of
heterogeneous treatment effects in the context of causal inference. In
many cases, we estimate causal effects using quasiexperiments that rely
on some source of exogenous variation in a treatment. Estimating a
heterogeneous effect for a quasiexperiment requires that that same
source of exogenous variation applies in each of the different settings
where we estimate an effect. Since many sources of exogenous variation
are bespoke, unusual, or serendipitous, this is unlikely. Identification
using a conditional independence assumption, if it is plausible, can be
applied in a much wider range of settings than a design relying on an
instrumental variable, difference-in-differences design, or similar. As
a result, methods for estimating heterogeneous treatment effects often
focus on cases where identification relies on selection-on-observables
(the use of covariates / control variables with no explicit source of
exogenous variation).

In many fields, especially in economics, we are skeptical that causal
effects can be identified using selection on observables. However, if we
doubt the conditional independence assumption (or, similarly,
\(d\)-separation) assumption necessary to identify a causal effect using
control variables alone, we can quantify that doubt and provide a set of
partial identification (or ``set identification'') bounds that allow for
conditional independence to not hold exactly. Partial identification
bounds for the case of residual omitted variable bias are often
calculated as a form of sensitivity bounds, where estimates are
recalculated under a bounded range of plausible assumption violations
\citep{cinelli2019, oster2019unobservable}.

In this paper, I provide a method for bridging the gap between, on one
hand, heterogeneous causal effect estimation, and on the other hand
partial identification when the conditional independence assumption does
not hold exactly.

Methods like \citet{cinelli2019} can already produce partial
identification bounds in many different settings independently. However,
estimating partial identification bounds separately by setting ignores
the fact that there are likely to be dependencies between the assumption
violations in different settings. For example, if omitted variable bias
in setting \(j\) is positive, then this might make us more likely to
believe that it is positive in setting \(k\) as well. This paper uses
those dependencies to add additional information. In some cases this
additional information can shrink partial identification bounds on
individual effects, but more often the additional information shrinks
bounds on the partially identified \emph{joint} distribution of effects,
producing a more informative result. The joint distribution of effects
can then be explored by ``pinning'' estimates - forcing an effect in
setting \(j\) to some part of its identified set, such as the upper or
lower end of that distribution, and seeing what effects in other
settings are consistent with that estimate.

The method is fairly general and requires only a set of partial
identification bounds for each setting, which can be produced by any
number of different methods, like \citet{cinelli2019}. I add to this a
matrix of bounds that tie together the residual omitted variable bias
across settings. This matrix can be proposed by assumption, but I also
propose a method of ``supershort'' regression that can produce plausible
bounds. This matrix can be used to narrow the provided set of partial
identification bounds, and can also answer questions like ``if the
effect in setting \(j\) is near the upper end of its range, what is the
identified set of estimates for setting \(k\)?''

In this paper I briefly review the literature on partial identification
and heterogeneous treatment effect estimation, then introduce the
method. I present software packages able to implement the method in R,
Python, and Stata, and then demonstrate the method using an example
where I estimate the labor market returns to attending college in the
United States and see how the return differs over time and across
states.

\section{Partial Identification and Heterogeneous Treatment
Effects}\label{partial-identification-and-heterogeneous-treatment-effects}

Researchers have long been aware of the fact that our typical approach
to performing causal inference relies on assumptions that are necessary
to identify an effect but are unlikely to be true (``strong''
assumptions). Two potential reactions to this fact are either to ignore
the problem, or to conclude that this makes causal inference impossible.
A third approach is partial identification, also known as set
identification, in which the author only makes assumptions that they
actually believe are plausible (``weak'' assumptions), and then asks
which estimates are consistent with those assumptions, accepting that
what is identified may be a range of answers rather than a single answer
\citep{manski2020lure}.

Partial identification is a general approach. The procedure for
constructing identified sets differs depending on the analysis being
performed as well as which assumptions are being weakened, and there are
many approaches to doing so \citep{tamer2010partial}. In this paper we
focus specifically on the case of omitted variable bias in linear
models. In a standard case, a researcher wishes to identify the causal
effect of \(X\) on \(Y\) by conditioning on a set of control variables
\(Z\):

\[Y=\beta_0+\beta_1X+\beta_2Z+\varepsilon\]

Concluding that a regression of \(Y\) on \(X\) and \(Z\) produces a
\(\hat{\beta}_1\) that identifies the causal effect of \(X\) on \(Y\)
requires the assumption \(Cov(X,\varepsilon)=0\). We can refer to this
as a conditional independence assumption, or think of the closely
related concept of \(d\)-separation in the field of structural causal
modeling \citep{pearl2018book}.

Current approaches to partial identification concerning the conditional
independence assumption often take the form of imposing weaker
restrictions on \(Cov(X,\varepsilon)\), for instance assuming that
\(Cov(X,\varepsilon)\in[\mu_L,\mu_U]\) instead of requiring
\(Cov(X,\varepsilon)=0\). Any non-0 value implies that \(\hat{\beta}_1\)
is a biased estimate of \(\beta_1\), but the
\(Cov(X,\varepsilon)\in[\mu_L,\mu_U]\) bounds limit the potential size
of the bias and thus imply a set of bounds on \(\beta_1\).

Allowing \(Cov(X,\varepsilon)\neq 0\) and calculating the implied
\(\beta_1\) values that result from the weakening of this assumption is
also referred to as sensitivity analysis. Results follow directly and
clearly from the standard formula for omitted variable bias, and the
only remaining task is to determine plausible \([\mu_L,\mu_U]\) bounds,
which often proceeds by looking at the amount of bias removed by the
inclusion of measured controls.

There is a long history of this approach, dating back at least as far as
\citet{cornfield1959smoking}. This sensitivity analysis process has seen
renewed interest over the past few decades, spurred by
\citet{imbens2003sensitivity} and \citet{altonji2005selection} and with
work following through to the current day with
\citet{oster2019unobservable} and a nonparametric variant in
\citet{masten2018identification}. There is a long related literature
taking a similar approach in the domain of matching, which generally
follows from \citet{rosenbaum1983assessing} and for which there are many
modern variants
\citep[e.g.,][]{liu2013introduction, yadlowsky2022bounds}.

\citet{cinelli2019} produced a general and updated framework for
sensitivity analysis using partial \(R^2\) values to bound omitted
variable bias and produce plausibility bounds for estimated parameters.
\citet{chernozhukov2026} extends that framework in application to
debiased machine learning.

This paper links the work on partial identification in cases of omitted
variable bias with work on heterogeneous treatment effects. There is
considerable interest in being able to estimate treatment effects that
are heterogeneous across many different settings, especially in the
developing field of causal machine learning \citep[e.g.,][among many
others]{wager2018estimation}. However, identification in these methods
often relies on a conditional independence assumption.

There is already work on partial identification of heterogeneous
effects. In some cases this is a straightforward extension - the
aforementioned methods that perform sensitivity analysis for average
treatment effects can simply be applied in different settings to produce
bounds on conditional average treatment effects (CATE). In some cases
the link between the ATE and CATE bounds are made explicit, as in
\citet{yadlowsky2022bounds} or \citet{lanners2025data} where bounds on
the CATE inform the sensitivity analsysis for the ATE.

Other methods more explicitly focus on heterogeneous treatment effects
(CATE or otherwise) and their partial identification bounds.
\citet{fan2010sharp} use Fréchet--Hoeffding bounds, which limit the
differences between two distributions (in this case, observed and
counterfactual) to produce sharp partial identification bounds on the
distribution of treatment effects. The method is intended for use in a
randomized experiment context, but allows adjustment for covariates.

\citet{chen2023differential} look at cases where two treatment variables
are available. They use a differential-effect approach and show that
effects can be bounded between a differential-effects estimate and an
estimate generated by inverse probability weighting. They specify
partial identification bounds as well as the conditions under which this
allows for both average treatment effects and conditional average
treatment effects to be point identified.

The closest structural analogue to the present paper is
\citet{rambachan2023more}, which imposes restrictions on the change in
parallel-trends violations across periods in difference-in-differences
contexts. Like in this paper, \citet{rambachan2023more} bounds the
degree to which effects vary across settings. I pursue an analogous
agenda for a general omitted variable bias setting.

This paper expands upon previous work in partial identification of
conditional average treatment effects, in particular the sensitivity
analysis approach.

\section{Partial Identification of Effect
Variation}\label{partial-identification-of-effect-variation}

\subsection{Notation}\label{notation}

Let \(\mathcal{S} = \{1,\ldots,K\}\) denote a finite set of distinct
\emph{settings,} such as geographic regions, time periods, or subgroups.

\begin{assumption}[Partially Linear Model]
\label{asm:model}
In each setting $i \in \mathcal{S}$, the data-generating process satisfies the partially linear model
\[
Y^i = \theta^i D^i + f^i(X^i, A^i) + \varepsilon^i,
\]
where $Y^i$ is a continuously-valued outcome, $D^i$ is a real-valued treatment, $X^i$ is a vector of observed covariates, $A^i$ is a vector of unobserved covariates, and $E[\varepsilon^i \mid D^i, X^i, A^i] = 0$. The object of interest is the setting-specific causal effect $\theta^i$.
\end{assumption}

Since \(A^i\) is unobserved, the researcher estimates \(\theta^i\) using
the ``short'' regression that omits \(A^i\):

\[Y^i = \theta^i_s D^i + f^i_s(X^i) + \varepsilon^i_s.\]

Following \citet{chernozhukov2026}, we can express the ``long''
parameter \(\theta^i\) via a Frisch-Waugh-Lovell decomposition as the
projection of \(Y^i\) on the residuals from predicting \(D^i\) using
\(A^i\) and \(X^i\):

\[\theta^i = E\left[Y^i \frac{D^i - E[D^i|X^i,A^i]}{E\left[(D^i - E[D^i|X^i,A^i])^2\right]}\right] \equiv E[Y^i \alpha^i]\]

and similarly \(\theta^i_s \equiv E[Y^i \alpha^i_s]\), where
\(\alpha^i_s\) omits \(A^i\) in the denominator projection. \(g^i\)
represents the outcome regression function such that
\(g^i = E[Y^i|D^i, X^i, A^i]\) and \(g^i_s = E[Y^i|D^i,X^i]\). Then, the
omitted variable bias (OVB) incurred by using the short regression is:

\[B^i \equiv \theta^i_s - \theta^i = E[(g^i_s - g^i)(\alpha^i_s - \alpha^i)]\]

This is effectively the correlation between regression error
(\(g^i_s - g^i\)) and error in prediction of the treatment
(\(\alpha^i_s - \alpha^i\)). From here I begin to deviate from
\citet{chernozhukov2026}.

From the definition of bias, we can assume a bound on the size of that
bias, which immediately implies an identified set for \(\theta_s^i\):

\begin{assumption}[OVB Boundedness]
\label{asm:individual-bounds}
For each setting $i \in \mathcal{S}$, the omitted variable bias satisfies
\[
B^i \in [\nu^i_l, \nu^i_u]
\]
for known constants $\nu^i_l \leq \nu^i_u$.
\end{assumption}

Section \ref{sec-selecting-baseline-plausible-bounds} discusses how to
choose \([\nu^i_l, \nu^i_u]\) in practice. The implied bounds on
\(\theta^i\) follow immediately from the definition
\(\theta^i = \theta^i_s - B^i\).

\begin{proposition}[The Identified Set for an Individual Setting]
\label{prop:individual}
Under Assumptions \ref{asm:model} and \ref{asm:individual-bounds}, the true causal parameter satisfies
\[
\theta^i \in \mathcal{I}^i \equiv [\theta^i_s - \nu^i_u, \theta^i_s - \nu^i_l].
\]
\end{proposition}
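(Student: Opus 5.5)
The plan is to argue directly from the definition of the omitted variable bias, since the statement is essentially a rearrangement of Assumption \ref{asm:individual-bounds}. Under Assumption \ref{asm:model}, the long parameter $\theta^i = E[Y^i\alpha^i]$ and the short parameter $\theta^i_s = E[Y^i\alpha^i_s]$ are both well-defined. To make sure they are, I will note the requirement that the residual variances in the Riesz-representer denominators are strictly positive, so that $\alpha^i$ and $\alpha^i_s$ exist. With both parameters defined, the bias $B^i \equiv \theta^i_s - \theta^i$ is a well-defined real number. Rearranging this definition gives the identity
\[
\theta^i = \theta^i_s - B^i .
\]
I do not need the representation $B^i = E[(g^i_s-g^i)(\alpha^i_s-\alpha^i)]$ at all for this proposition; it only matters for interpreting and choosing the constants $\nu^i_l,\nu^i_u$.

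Next I invoke Assumption \ref{asm:individual-bounds}, which gives $\nu^i_l \le B^i \le \nu^i_u$. Multiplying through by $-1$ reverses the inequalities, so $-\nu^i_u \le -B^i \le -\nu^i_l$. Adding the constant $\theta^i_s$ to every part preserves the order, which yields
\[
\theta^i_s - \nu^i_u \le \theta^i \le \theta^i_s - \nu^i_l ,
\]
that is, $\theta^i \in \mathcal{I}^i$. Because $\nu^i_l \le \nu^i_u$, the interval $\mathcal{I}^i$ is nonempty, so the statement is not vacuous.

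There is no real obstacle here; the only point needing care is the sign reversal, which is why the upper bias bound $\nu^i_u$ produces the lower end of $\mathcal{I}^i$. If one also wanted sharpness, meaning every point of $\mathcal{I}^i$ is attainable, I would argue as follows. The only observable restriction is on $\theta^i_s$, which is identified from the distribution of $(Y^i,D^i,X^i)$. One can then construct an unobserved $A^i$ that generates any bias value in $[\nu^i_l,\nu^i_u]$, for instance by scaling the correlation between $g^i_s-g^i$ and $\alpha^i_s-\alpha^i$. The proposition as stated only claims containment, though, so I would stop at the inequality chain above.
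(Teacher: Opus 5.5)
Your proposal is correct and follows the paper's own argument: the paper simply notes that the bounds follow immediately from rearranging $B^i=\theta^i_s-\theta^i$ into $\theta^i=\theta^i_s-B^i$ and applying Assumption~\ref{asm:individual-bounds}, where the sign reversal sends $\nu^i_u$ to the lower endpoint. Your remarks on well-definedness and sharpness go beyond what the statement claims and are not needed.
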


The individual bounds \(\mathcal{I}^i\) treat each setting
independently. The remainder of the paper exploits cross-setting
dependence to obtain additional restrictions.

\subsection{Omitted Variable Bias in Multiple
Settings}\label{sec-multiple-settings}

Consider two settings \(j, k \in \mathcal{S}\). Two short regressions
estimated separately in each of these settings would be affected by
omitted variable biases (OVB) in those settings \(B^j\) and \(B^k\).
Assumptions bounding those biases would lead to two identified sets
\(\theta^j\) and \(\theta^k\). If we assume that those biases are
related in some way, then the joinly identified set
\([\theta^j,\theta^k]\) can have narrower bounds than each of the
setting-specific identified sets alone.

Here consider a case where a proportionality assumption applies, whereby
the bias in one setting is an unknown proportion of the bias in another
setting:

\begin{assumption}[Proportional Bias Relationship]
\label{asm:proportional}
For settings $j,k \in \mathcal{S}$, the omitted variable biases satisfy
\[
B^j = \rho^{jk} B^k,
\]
where $\rho^{jk} \in [\rho^{jk}_l,\; \rho^{jk}_u]$ for known constants $0<\rho^{jk}_l \leq 1$ and $\rho^{jk}_u \geq 1$.
\end{assumption}

\textbf{Structural motivation.} A natural data-generating process that
produces Assumption \ref{asm:proportional} is a single-factor model for
unobserved confounding. \(A^i\) is made up of a single common unobserved
factor (or a composite factor representing multiple variables). In this
case, the omitted variable bias in each setting is a scalar: the product
of the relationship between \(D^i\) and \(A^i\), and the direct impact
of \(A^i\) on \(Y^i\). The ratio of these scalars is itself scalar and
so there is a proportional relationship, as long as the signs match.
Outside of the univariate context, the partial-\(R^2\) parameterization
for OLS bias from \citet{cinelli2019} can also justify a proportionality
relationship between the two biases. Given the proportionality
assumption, bounding \(\rho^{jk}\) is then equivalent to bounding the
relative strength of unobserved confounding across settings.

\textbf{Implicit sign restriction.} Assumption \ref{asm:proportional}
with \(\rho^{jk} \in [\rho^{jk}_l, \rho^{jk}_u]\) and
\(\rho^{jk}_u\geq\rho^{jk}_l \geq 0\) implies that \(B^j\) and \(B^k\)
have the \emph{same sign} (or at least one is zero). This is a
substantive restriction. The direction of selection can plausibly differ
across contexts. Researchers should assess whether same-sign bias is
plausible before applying Assumption \ref{asm:proportional}. If sign
reversals are plausible for a given \(j\) and \(k\), then the researcher
may not wish to consider bias in one setting as informative of bias in
another, and would not want to apply Assumption \ref{asm:proportional}
to that pair.

\textbf{The zero-bias pathology.} If \(B^k = 0\) exactly, then
Assumption \ref{asm:proportional} forces \(B^j = 0\) regardless of
\(\rho^{jk}_u\). In practice, a near-zero \emph{estimated} bias in one
setting may reflect sampling noise rather than genuine absence of
confounding. Researchers should be cautious: if a setting's bias is near
zero, cross-setting restrictions propagated through that setting will be
artificially tight.

The proportionality assumption allows for the bounds on the effect in
each setting to shrink the jointly identified bounds:

\begin{definition}[Bias Difference Set]
\label{def:C}
Under Assumptions \ref{asm:individual-bounds} and \ref{asm:proportional}, define $D^{jk} = B^j - B^k = (\rho^{jk} - 1)B^k$. Since $D^{jk}$ is the product of $\rho^{jk} \in [\rho^{jk}_l, \rho^{jk}_u]$ and $B^k \in [\nu^k_l, \nu^k_u]$, the bounds on $D^{jk}$ come from a set of candidate extremes $\mathbb{C}^{jk}$:
\[
\mathbb{C}^{jk} = \left\{(\rho^{jk}_l-1)\nu^k_l, (\rho^{jk}_u-1)\nu^k_l, (\rho^{jk}_l-1)\nu^k_u, (\rho^{jk}_u-1)\nu^k_u\right\},
\]
and let $c^{jk}_l = \min(\mathbb{C}^{jk})$ and $c^{jk}_u = \max(\mathbb{C}^{jk})$. Then $D^{jk} \in [c^{jk}_l, c^{jk}_u]$.
\end{definition}

The value of \(\rho^{jk}\) as well as the signs of \(\nu^k_l\) and
\(\nu^k_u\) determine which of the four elements of \(\mathbb{C}^{jk}\)
is the minimum or maximum, so we do not simplify further in general.

Depending on context, it may be justifiable to add an additional
restriction on the proportionality assumption that the bounds are
proportionally symmetrical such that \(\rho^{jk}_l=1/\rho^{jk}_u\). In
that case the bounds on \(D^{jk}\) can be simplified to be dependent on
three parameters rather than four:

\begin{definition}[Bias Difference Set Under Symmetry]
\label{def:Csymmetry}
Under Assumptions \ref{asm:individual-bounds} and \ref{asm:proportional}, define $D^{jk} = B^j - B^k = (\rho^{jk} - 1)B^k$. Since $D^{jk}$ is the product of $\rho^{jk} \in [1/\rho^{jk}_u, \rho^{jk}_u]$ and $B^k \in [\nu^k_l, \nu^k_u]$, the bounds on $D^{jk}$ come from a set of candidate extremes $\mathbb{C}^{jk}$:
\[
\mathbb{C}^{jk} = \left\{\frac{1-\rho^{jk}_u}{\rho^{jk}_u}\nu^k_l, (\rho^{jk}_u-1)\nu^k_l, \frac{1-\rho^{jk}_u}{\rho^{jk}_u}\nu^k_u, (\rho^{jk}_u-1)\nu^k_u\right\},
\]
and let $c^{jk}_l = \min(\mathbb{C}^{jk})$ and $c^{jk}_u = \max(\mathbb{C}^{jk})$. Then $D^{jk} \in [c^{jk}_l, c^{jk}_u]$.
\end{definition}

\subsection{The Jointly Identified
Set}\label{sec-jointly-identified-set}

Using Definition \ref{def:C} (or, alternately, Definition
\ref{def:Csymmetry}), we can now state the main result for two settings:
the joint identified set for the pair \((\theta^j, \theta^k)\).

\begin{proposition}[Joint Identified Set for Two Settings]
\label{prop:joint}
Under Assumptions \ref{asm:model}, \ref{asm:individual-bounds}, and \ref{asm:proportional}, the true pair $(\theta^j, \theta^k)$ belongs to the joint identified set
\[
\mathcal{J}^{jk} = \left\{(\theta^j, \theta^k) : \theta^j \in \mathcal{I}^j, \theta^k \in \mathcal{I}^k, \theta^j - \theta^k \in \mathcal{I}_D^{jk}\right\}
\]
where
\[
\mathcal{I}_D^{jk} = \left[(\theta^j_s - \theta^k_s) - c^{jk}_u, (\theta^j_s - \theta^k_s) - c^{jk}_l\right].
\]
\end{proposition}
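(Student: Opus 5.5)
The plan is to verify the three membership conditions defining $\mathcal{J}^{jk}$ one at a time, each from an earlier result. The first two, $\theta^j \in \mathcal{I}^j$ and $\theta^k \in \mathcal{I}^k$, follow directly from Proposition \ref{prop:individual}: that proposition uses only Assumptions \ref{asm:model} and \ref{asm:individual-bounds}, and both hold here. So the real content is the third condition, $\theta^j - \theta^k \in \mathcal{I}_D^{jk}$.

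For the third condition I will work from the identity $\theta^i = \theta^i_s - B^i$, which is just the definition of $B^i$. Subtracting the two settings gives
\[
\theta^j - \theta^k = (\theta^j_s - \theta^k_s) - (B^j - B^k) = (\theta^j_s - \theta^k_s) - D^{jk}.
\]
If $D^{jk} \in [c^{jk}_l, c^{jk}_u]$, then the map $D \mapsto (\theta^j_s-\theta^k_s) - D$ is decreasing. It therefore sends this interval onto $[(\theta^j_s - \theta^k_s) - c^{jk}_u,\, (\theta^j_s - \theta^k_s) - c^{jk}_l] = \mathcal{I}_D^{jk}$, which is the claim.

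The main step is to justify $D^{jk} \in [c^{jk}_l, c^{jk}_u]$ as stated in Definition \ref{def:C}, rather than take it on faith. Under Assumption \ref{asm:proportional}, $B^j = \rho^{jk}B^k$, so $D^{jk} = (\rho^{jk}-1)B^k$. Here $\rho^{jk}-1$ ranges over the interval $[\rho^{jk}_l-1, \rho^{jk}_u-1]$ and $B^k$ ranges over $[\nu^k_l,\nu^k_u]$ by Assumption \ref{asm:individual-bounds}. The function $(a,b)\mapsto ab$ is bilinear, so on a rectangle it attains its minimum and maximum at the corners. To see this, fix $b$: the product is linear in $a$, so its extreme values occur at an endpoint in $a$; then repeat the argument in $b$. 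The four corners are exactly the elements of $\mathbb{C}^{jk}$, so $c^{jk}_l \le D^{jk} \le c^{jk}_u$.

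This corner argument handles every sign configuration of $\nu^k_l,\nu^k_u$ at once, so no case split is needed. The symmetric version, Definition \ref{def:Csymmetry}, is the special case $\rho^{jk}_l = 1/\rho^{jk}_u$, where $\rho^{jk}_l - 1 = (1-\rho^{jk}_u)/\rho^{jk}_u$, so the same argument applies verbatim.

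The only point that needs care is that the result is a containment: the true pair lies in $\mathcal{J}^{jk}$, and sharpness is not claimed. I will therefore not try to prove that every point of $\mathcal{J}^{jk}$ is attainable. This matters because the joint constraints linking $B^j$, $B^k$ and $\rho^{jk}$ are coupled, and the rectangle-plus-difference description is only an outer set.
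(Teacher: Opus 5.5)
Your proof is correct and follows the paper's argument. Both get the first two conditions from Proposition \ref{prop:individual}, and both get the third by writing $\theta^j-\theta^k=(\theta^j_s-\theta^k_s)-D^{jk}$ and using $D^{jk}\in[c^{jk}_l,c^{jk}_u]$; your only addition is the bilinear corner argument for that interval, which the paper simply asserts inside Definition \ref{def:C}.
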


\begin{proof}
By Proposition \ref{prop:individual}, $\theta^j \in \mathcal{I}^j$ and $\theta^k \in \mathcal{I}^k$. For the difference constraint, note that
\[
\theta^j - \theta^k = (\theta^j_s - B^j) - (\theta^k_s - B^k) = (\theta^j_s - \theta^k_s) - (B^j - B^k) = (\theta^j_s - \theta^k_s) - D^{jk}.
\]
Since $D^{jk} \in [c^{jk}_l,\; c^{jk}_u]$ by Definition \ref{def:C}, it follows that
\[
\theta^j - \theta^k \in \left[(\theta^j_s - \theta^k_s) - c^{jk}_u, (\theta^j_s - \theta^k_s) - c^{jk}_l\right] = \mathcal{I}_D^{jk},
\]
which completes the proof.
\end{proof}

\textbf{Remark on joint versus marginal sets.} Proposition
\ref{prop:joint} establishes that cross-setting restrictions constrain
the joint identified set \(\mathcal{J}^{jk}\). This may or may not
constrain the marginal identified sets for \(\theta^j\) and
\(\theta^k\). Effectively, any value in \(\mathcal{I}^j\) that is
consistent with \emph{any} value of \(\theta^j\in\mathcal{I}^k\) will
remain in the marginal identified set for \(\theta^j\), while only
values that are inconsistent with \emph{all} values of
\(\theta^j\in\mathcal{I}^k\) will be omitted from the marginal
identified set. So while the joint identified set adds information that
reduces the jointly identified parameter space, shrinkage of the
marginal identfied sets may occur sometimes but not always. The
following proposition characterizes when marginal shrinkage occurs.

\begin{proposition}[Marginal Sharpening Conditions]
\label{prop:marginal}
The marginal identified set for $\theta^j$ derived from the joint set $\mathcal{J}^{jk}$ is
\[
\mathrm{proj}_j(\mathcal{J}^{jk}) = \mathcal{I}^j \cap \left[\theta^j_s - \nu^k_u - c^{jk}_u, \theta^j_s - \nu^k_l - c^{jk}_l\right].
\]
Strict marginal sharpening occurs when $\mathrm{proj}_j(\mathcal{J}^{jk}) \subsetneq \mathcal{I}^j$. Marginal sharpening happens if and only if at least one of the following holds:
\begin{enumerate}
\item $\nu^k_u + c^{jk}_u < \nu^j_u$ (the lower bound of $\theta^j$ is raised);
\item $\nu^k_l + c^{jk}_l > \nu^j_l$ (the upper bound of $\theta^j$ is lowered).
\end{enumerate}
\end{proposition}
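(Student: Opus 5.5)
The approach is to compute the projection directly: $\mathcal{J}^{jk}$ is a polygon cut out by interval constraints on $\theta^j$, on $\theta^k$, and on $\theta^j-\theta^k$. A value $t\in\mathcal{I}^j$ survives the projection exactly when there is some $\theta^k\in\mathcal{I}^k$ with $t-\theta^k\in\mathcal{I}_D^{jk}$. Equivalently, $\theta^k$ must lie in $\mathcal{I}^k\cap[t-(\theta^j_s-\theta^k_s)+c^{jk}_l,\; t-(\theta^j_s-\theta^k_s)+c^{jk}_u]$. Two closed intervals intersect if and only if each lower endpoint is at most the other's upper endpoint. This gives two linear inequalities in $t$. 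The first is
\[
t-(\theta^j_s-\theta^k_s)+c^{jk}_l \le \theta^k_s-\nu^k_l,
\]
which rearranges to $t\le \theta^j_s-\nu^k_l-c^{jk}_l$. The second is
\[
\theta^k_s-\nu^k_u \le t-(\theta^j_s-\theta^k_s)+c^{jk}_u,
\]
which rearranges to $t\ge \theta^j_s-\nu^k_u-c^{jk}_u$. Together with $t\in\mathcal{I}^j$, this yields the displayed formula for $\mathrm{proj}_j(\mathcal{J}^{jk})$. We also need $c^{jk}_l\le c^{jk}_u$ so that $\mathcal{I}_D^{jk}$ is nonempty, and this holds by Definition \ref{def:C}.

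Next comes the sharpening characterization. Intersecting $\mathcal{I}^j=[\theta^j_s-\nu^j_u,\theta^j_s-\nu^j_l]$ with $[\theta^j_s-\nu^k_u-c^{jk}_u,\theta^j_s-\nu^k_l-c^{jk}_l]$ gives a proper subset exactly when the second interval fails to contain $\mathcal{I}^j$. That happens when either its lower endpoint exceeds $\theta^j_s-\nu^j_u$, i.e.\ $\nu^k_u+c^{jk}_u<\nu^j_u$, or its upper endpoint is below $\theta^j_s-\nu^j_l$, i.e.\ $\nu^k_l+c^{jk}_l>\nu^j_l$. These are conditions 1 and 2, with the stated interpretations (raising the lower bound, lowering the upper bound). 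For the converse, if neither condition holds, the second interval contains $\mathcal{I}^j$, so the intersection equals $\mathcal{I}^j$ and there is no sharpening.

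The main obstacle is a degenerate case: the intersection may be empty, or it may collapse so that an endpoint condition holds while the set is still a "proper subset" in a trivial sense. Under the maintained assumptions, the true $\theta^j$ lies in the projection, so the set is nonempty whenever the model is correctly specified. I would note this, and argue that in any case emptiness is a proper subset of the nonempty $\mathcal{I}^j$, so the iff remains valid. I would also briefly justify the interval-intersection criterion, since it is used for closed bounded intervals.
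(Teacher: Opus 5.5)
Your proposal is correct and follows the paper's proof essentially step for step. As in the appendix, you reduce membership of $t$ in the projection to nonemptiness of $\mathcal{I}^k \cap [t-d_u,\, t-d_l]$, rearrange the two endpoint inequalities into $\theta^j_s-\nu^k_u-c^{jk}_u \le t \le \theta^j_s-\nu^k_l-c^{jk}_l$, and then read off sharpening by comparing these endpoints with those of $\mathcal{I}^j$. You are somewhat more careful than the paper, which only spells out the forward inequalities: you check that both intervals are nonempty, you prove the converse explicitly (if neither condition holds, the second interval contains $\mathcal{I}^j$, so the intersection equals $\mathcal{I}^j$), and you note that an empty intersection still counts as a proper subset.
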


Proof is in Appendix Section~\ref{sec-apx-marginal}.

In addition to showing where the jointly identified set places
restrictions on the marginal sets, the marginal sharpening conditions
also show when the joint restrictions make a marginal set empty such
that no parameter value is consistent with the assumed restrictions. In
these cases, the bounds imposed on at least one of \(\theta^j\),
\(\theta^k\), or \(\rho^{jk}\) are inconsistent with the data and must
be relaxed.

\textbf{Worked example and the value of the joint restriction.} To see
the implications of Proposition \ref{prop:joint} and Corollary
\ref{cor:symmetric} in \ref{sec-apx-marginal}, consider
\(\theta^j_s = \theta^k_s = 1\), \(\nu^j_l = \nu^k_l = -1\),
\(\nu^j_u = \nu^k_u = 1\), and \(\rho^{jk}_u = 2\). This is a symmetric
equal-bounds case. By Corollary \ref{cor:symmetric} in
\ref{sec-apx-marginal}, no marginal sharpening occurs:
\(\text{proj}_j(\mathcal{J}^{jk}) = \mathcal{I}^j = [0, 2]\).

The elements of \(\mathbb{C}^{jk}\) are \(\{1/2, -1, -1/2, 1\}\), giving
\(c^{jk}_l = -1\) and \(c^{jk}_u = 1\), so the difference constraint is
\(\mathcal{I}^D_{jk} = [0-1, 0+1] = [-1, 1]\). The full restrictions are
\(\theta^j \in [0,2]\), \(\theta^k \in [0,2]\), and
\(\theta^j - \theta^k \in [-1,1]\). The third constraint rules out
combinations like \((\theta^j, \theta^k) = (0, 2)\) or \((2, 0)\), which
would have been permitted by the individual bounds alone. This is the
value of the joint restriction: it restricts the set of admissible
combinations, even when it does not shrink the individual ranges.

\subsection{\texorpdfstring{Extension to \(K > 2\)
Settings}{Extension to K \textgreater{} 2 Settings}}\label{sec-k-settings}

With \(K > 2\) settings, the proportionality assumption (Assumption
\ref{asm:proportional}) is applied for each pair \((j, k)\) where a
restriction is desired. A researcher may choose to impose no constraint
on some pairs \((j,k)\) where bias in one setting is presumed not to be
informative about bias in the other by setting \(\rho^{jk}_l=-\infty\)
and \(\rho^{jk}_u=\infty\) (\(\rho^{jk}_l<0\) violates Assumption
\ref{asm:proportional} for that pair since their biases are not assumed
to have a proportional relationship).

\begin{proposition}[$K$-Setting Joint Identified Set]
\label{prop:k-settings}
Under Assumption \ref{asm:model} and Assumption \ref{asm:individual-bounds} for all $i \in \mathcal{S}$, and Assumption \ref{asm:proportional} for all pairs $(j,k)$, the true vector $(\theta^1, \ldots, \theta^K)$ belongs to the polytope
\[
\mathcal{J} = \left\{(\theta^1,\ldots,\theta^K) : \theta^i \in \mathcal{I}^i \forall i, \theta^j - \theta^k \in \mathcal{I}^D_{jk} \forall (j,k)\right\}.
\]
This polytope is defined by at most $2K + K(K-1)$ linear inequality constraints in $K$ unknowns. The marginal identified set for any $\theta^i$ is the interval $[\underline{\theta}^i, \overline{\theta}^i]$ where
\[
\underline{\theta}^i = \min_{(\theta^1,\ldots,\theta^K)\in\mathcal{J}} \theta^i, \qquad \overline{\theta}^i = \max_{(\theta^1,\ldots,\theta^K)\in\mathcal{J}} \theta^i,
\]
each of which is the solution to a linear program.
\end{proposition}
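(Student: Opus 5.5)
The plan is to reduce everything to the two-setting result, Proposition \ref{prop:joint}, applied pairwise, and then read off the structure of the resulting set.

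\textbf{Membership.} Fix the true vector $(\theta^1,\ldots,\theta^K)$. For each $i$, Proposition \ref{prop:individual} (using Assumptions \ref{asm:model} and \ref{asm:individual-bounds}) gives $\theta^i\in\mathcal{I}^i$. For each ordered pair $(j,k)$ on which Assumption \ref{asm:proportional} is imposed, the difference step in the proof of Proposition \ref{prop:joint} gives $\theta^j-\theta^k\in\mathcal{I}^D_{jk}$. That proof uses only the quantities $B^j$, $B^k$, $\rho^{jk}$ for that pair, so the pairwise conclusions hold simultaneously. Pairs left unconstrained, with $\rho^{jk}$ bounds infinite, contribute the trivial interval $\mathbb{R}$. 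Intersecting all these conditions places the true vector in $\mathcal{J}$.

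\textbf{Polytope and counting.} Each condition $\theta^i\in\mathcal{I}^i$ is two linear inequalities, giving $2K$ in total. Each condition $\theta^j-\theta^k\in\mathcal{I}^D_{jk}$ is two linear inequalities in the unknowns. There are $K(K-1)/2$ unordered pairs, so at most $K(K-1)$ further inequalities. The count is ``at most'' for two reasons: unconstrained pairs drop out, and the ordered pairs $(j,k)$ and $(k,j)$ both restrict the same linear form $\theta^j-\theta^k$ (up to sign), so they can be intersected into a single two-sided constraint. A finite intersection of closed half-spaces is a polyhedron. It is bounded because it sits inside the box $\prod_i\mathcal{I}^i$, so it is a polytope.

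\textbf{Marginal sets.} The projection of $\mathcal{J}$ onto coordinate $i$ is the image of a compact convex set under a linear map. It is therefore a compact convex subset of $\mathbb{R}$, namely a closed interval, or empty if the constraints are mutually inconsistent. Its endpoints are the minimum and maximum of the linear objective $\theta^i$ over $\mathcal{J}$. These extrema are attained by compactness, and each one is by definition a linear program with the constraints listed above.

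\textbf{Main obstacle.} The only delicate point is interpretive. The marginal set here is the projection of $\mathcal{J}$, and $\mathcal{J}$ is an outer set containing the truth, so this result does not assert sharpness. I will not try to prove sharpness, since pairwise bounds on $D^{jk}$ may ignore joint consistency of the $\rho^{jk}$ across triples. I will also note the infeasible case, matching the remark after Proposition \ref{prop:marginal}: if $\mathcal{J}$ is empty, the assumptions are refuted.
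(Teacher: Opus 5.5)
Your proposal is correct and takes the same route as the paper: apply Proposition \ref{prop:joint} to each pair, intersect the results with the individual bounds $\mathcal{I}^i$, and count two inequalities per setting and two per unordered pair. Your extra details fill in steps the paper's short proof leaves implicit, namely boundedness via the box $\prod_i \mathcal{I}^i$, compactness and convexity to make each projection a closed interval with LP endpoints, and merging the $(j,k)$ and $(k,j)$ constraints, which also repairs the paper's slide from ordered to unordered pairs in its count.
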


\begin{proof}
By Proposition \ref{prop:joint} applied to each pair, the true vector belongs to $\mathcal{J}^{jk}$ for all relevant pairs. The intersection of all these pairwise joint sets and the individual bounds gives $\mathcal{J}$. Each individual bound $\theta^i \in [\theta^i_s - \nu^i_u, \theta^i_s - \nu^i_l]$ contributes 2 linear constraints; each difference bound $\theta^j - \theta^k \in [d^{jk}_l, d^{jk}_u]$ contributes 2 linear constraints for each ordered pair. With $K(K-1)/2$ pairs, each contributing 2 constraints, plus $2K$ individual constraints, the total number of constraints is $2K + K(K-1)$, some of which may be trivial constraints for pairs with infinitely wide bounds on their differences.
\end{proof}

\textbf{Remark on pairwise consistency.} With \(K\) settings, the
\(K(K-1)/2\) pairwise proportionality assumptions are not mutually
independent. Transitivity constraints apply: if \(B^j = \rho^{jk} B^k\)
and \(B^k = \rho^{km} B^m\), then \(B^j = \rho^{jk}\rho^{km} B^m\),
implying \(\rho^{jm} = \rho^{jk}\rho^{km}\). If the three pairwise
bounds \(\rho^{jk}_u\), \(\rho^{km}_u\), \(\rho^{jm}_u\) are estimated
separately such that their values are not jointly determined, there is
no guarantee that they satisfy this transitivity. Inconsistent pairwise
bounds may cause the joint polytope \(\mathcal{J}\) to be small or
perhaps null, because the difference constraints impose implicit
cross-constraints.

\textbf{Remark on statistical inference.} The identified sets
\(\mathcal{J}\) and its projections are population objects: they
characterize the range of \((\theta^1,\ldots,\theta^K)\) consistent with
the stated assumptions, treating the short-regression estimates
\(\theta^i_s\) and the bias bounds as known. In practice,
\(\hat\theta^i_s\) is estimated with sampling error, and the estimated
bias bounds also carry uncertainty. A complete inference framework must
address both identification uncertainty (the width of the identified
set) and sampling uncertainty (variability of \(\hat\theta^i_s\) and
\(\hat\nu\)). \citet{imbens2004confidence} and \citet{stoye2009more}
provide frameworks for constructing confidence regions for partially
identified parameters in the single-setting case; extending these
frameworks to the multi-setting joint polytope \(\mathcal{J}\) is a
natural direction for future work. In the current paper, the reported
bounds should be interpreted as population identified sets under the
stated assumptions, rather than as confidence sets that account for
sampling error.

\section{Details in Application}\label{sec-application}

\subsection{Exploring the Joint Set via ``Pinning''}\label{sec-pinning}

The addition of joint restrictions on bias is not guaranteed to shrink
marginal identified sets. The primary contribution of this paper is
therefore the restriction on the \emph{joint} identified set
\(\mathcal{J}^{jk}\), which restricts the heterogeneity in effects by
bounding the difference \(\theta^j - \theta^k\). The resulting
identified set is then an \(N\)-dimensional object, where \(N\) is the
number of settings. Reporting an estimated \(N\)-dimensional object is
qualitatively difficult, and conclusions may be difficult to draw.

A meaningful way to explore the resulting identified set is by
``pinning'' one of the settings, where a single value is chosen for one
of the settings \(j\) (the estimated effect in setting \(j\) is
``pinned'' to a specific value). Any restrictions between setting \(j\)
and other settings then sharpens the jointly identified set, showing
plausible values of estimates in other settings given the initial pinned
value. Then, by exploring the jointly identified set given different
pinned values from different parts of the marginal identified set of
\(j\), the researcher can demonstrate the joint dependencies between
settings and make claims like ``conditional on the effect in setting
\(j\) being near the (upper end / lower end / middle) of its marginal
identified set, then here are the marginal distributions of effects in
other settings consistent with that value.'' More colloquially, ``at the
low end, the identified sets look like this, and at the high end, they
look like this.''

\begin{definition}[Conditional Identified Set / Pinning]
\label{def:pinning}
Given the joint identified set $\mathcal{J}^{jk}$ from Proposition \ref{prop:joint} and a fixed value $t \in \mathcal{I}^j$, the \emph{conditional identified set} for $\theta^k$ given $\theta^j = t$ is
\[
\mathcal{I}^k(t) = \mathcal{I}^k \cap \left[t - (\theta^j_s - \theta^k_s) + c^{jk}_l, t - (\theta^j_s - \theta^k_s) + c^{jk}_u\right].
\]
This set is always at most as large as $\mathcal{I}^k$ and may be strictly smaller whenever the cross-setting constraint is active.
\end{definition}

\emph{Derivation.} Given \(\theta^j = t\), the constraint
\(\theta^j - \theta^k \in \mathcal{I}^D_{jk} = [d_l, d_u]\) implies
\(\theta^k \in [t - d_u, t - d_l]\). Substituting
\(d_u = (\theta^j_s - \theta^k_s) - c^{jk}_l\) and
\(d_l = (\theta^j_s - \theta^k_s) - c^{jk}_u\) gives the result.

Continuing the worked example from
Section~\ref{sec-jointly-identified-set}: fixing \(t = 0\) (the lower
end of \(\mathcal{I}^j\)), \[
\mathcal{I}^k(0) = [0,2] \cap \left[0 - 0 + (-1),\; 0 - 0 + 1\right] = [0,2] \cap [-1,1] = [0,1].
\] So pinning \(\theta^j\) to its lower bound restricts \(\theta^k\) to
\([0,1]\), halving its range. Symmetrically, fixing \(t = 2\) restricts
\(\theta^k\) to \([1,2]\).

\subsection{Selecting Baseline Plausible
Bounds}\label{sec-selecting-baseline-plausible-bounds}

There are two kinds of plausible bounds that must be selected to perform
the method in Section~\ref{sec-multiple-settings}. The first is the
selection of the individual-setting bounds \([\nu^i_l, \nu^i_u]\) which
bound the degree of omitted variable bias in each setting \(i\). This
can be done in any number of ways, for example using the methods
outlined in \citet{cinelli2019}, \citet{chernozhukov2026}, or even via
matching with a derivative of Rosenbaum bounds.

The second parameter that must be bounded is \(\rho^{jk}\), the omitted
variable bias proportionality coefficient for each pair of settings
\((j, k)\).

\(\rho^{jk}\) can be bounded using external knowledge. For example, if
the settings represent geographic locations, plausibility can be used to
bound \(\rho^{jk}\) for adjacent locations, and then non-adjacent
locations can have no modeled relationship, or the relationship can
decay be widening the bounds on the relationship as a function of
distance.

A data-based approach to bounding (either for all pairs of settings or a
subset) can make use of an approach similar to \citet{cinelli2019} where
we use the bias resolved by a set of observed variables as a benchmark
for bias that might be resolved by the addition of unobserved variables.
We begin with the short regression to be applied in setting \(i\):

\[ Y^i=\theta^i_s D^i+f^i_s(X)+\varepsilon^i_s\]

and further segment \(X^i\) into two covariate sets, \(X^i_1\) and
\(X^i_2\), where \(X^i_2\) is non-empty. Then, we estimate the short
regression in setting \(i\), as well as the super-short regression:

\[Y^i=\theta^i_{ss}D^i+f^i_{ss}(X^i_1)+\varepsilon^i_{ss}\] Then, the
difference between the estimated \(\theta^i_s\) and \(\theta^i_{ss}\) is
the omitted variable bias that is removed by the addition of \(X^i_2\)
to the model, \(B^i_{ss}=\hat{\theta}^i_s-\hat{\theta}^i_{ss}\).

The formal justification for using \(B^i_{ss}\) to bound \(\rho^{jk}\)
rests on the following assumption.

\begin{assumption}[Proportional Confounding Across Settings]
\label{asm:supershort}
For settings $j$ and $k$ for which $\text{sign}(B^{j})=\text{sign}(B^{k})$, the ratio of unobserved OVBs is bounded by the ratio of observed covariate-induced biases:
\[
\rho^{jk} \in \left[\min\{\frac{B^j_{ss}}{B^k_{ss}},\frac{B^k_{ss}}{B^j_{ss}}\},\max\{\frac{B^j_{ss}}{B^k_{ss}},\frac{B^k_{ss}}{B^j_{ss}}\}\right],
\]
\end{assumption}

Assumption \ref{asm:supershort} states that the settings where observed
confounders generate more bias (i.e.~the supershort analysis leads to
large changes in the estimate) are also the settings where unobserved
confounders generate more bias, in the same relative proportion. The
ratio of observed-covariate biases between two settings then acts as a
guide for the ratio of unobserved-covariate biases. This is analogous to
how \citet{cinelli2019} and similar studies use observed confounding to
bound unobserved confounding, but applied across settings rather than
within a single setting. The assumption fails if the degree of observed
confounding is not correlated with the degree of unobserved confounding
across settings.

\textbf{Remark on the supershort estimator's limitations.} The
supershort approach has several important practical limitations.

First, \(B^j_{ss}\) and \(B^k_{ss}\) are estimated with sampling error,
and the ratio of two estimated quantities is unstable when the
denominator is near zero. If a setting's observed covariate-induced bias
change is near zero, the bounds on \(\rho^{jk}\) will be too wide to be
informative. Given that the proportionality assumption
\ref{asm:proportional} likely fails in these contexts anyway, the
supershort method's imposition of very loose bounds is desirable.
Similarly, the supershort method cannot be meaningfully applied when
\(\text{sign}(B^{i})\neq\text{sign}(B^{j})\), but this is a case where
the data provides little information on the relationship between biases
in these scenarios and so failing to impose a restriction in that
context is desirable.

Second, the choice of which variables to include in \(X^i_1\) versus
\(X^i_2\) is consequential: different partitions yield different
supershort estimates and therefore different \(\rho^{jk}_u\) values.
Researchers should report sensitivity to this partitioning choice.

\section{Software}\label{software}

In order to make the application of the method in this paper easier to
execute, I provide software packages capable of producing matrices of
\(\rho^{jk}\) bounds, as well as partial identification bounds by
setting, and the narrowed partial identification bounds that emerge as a
result of applying the \(\rho^{jk}\) restrictions.

The package is called \textbf{hetset}, short for
\emph{het}erogeneous-effect \emph{set} identification, and is available
for the R language on CRAN (install with
\texttt{install.packages(\textquotesingle{}hetset\textquotesingle{})}),
for the Python language on PyPi (install with
\texttt{pip\ install\ hetset}), and for the Stata language on ssc
(install with \texttt{ssc\ install\ hetset}).

In all three packages, the method can be applied by first estimating
partial identification bounds separately in each setting with the
\texttt{ovb\_bounds\_by\_setting} function. By default, this uses the
\textbf{sensemakr} package (also available in all three languages) to
apply \citet{cinelli2019}. Then, the user can specify their own set of
\(\rho^{jk}\) bounds, or can use the function
\texttt{supershort\_rho\_bounds\_proposal} to get a set of \(\rho^{jk}\)
values from the ``supershort'' method in
Section~\ref{sec-selecting-baseline-plausible-bounds}.

With the bounds matrix and the data, the user can build a set of partial
identification bounds using \texttt{build\_bounds}, which can then be
displayed using \texttt{univariate\_bounds\_table}. The
\texttt{univariate\_bounds\_table} function also has a
\texttt{pin\_values} option allowing the user to fix the effect in a
chosen setting at a given part of its range (implementing ``pinning''
from Definition \ref{def:pinning}), to see how this impacts the
estimates in other settings.

\section{The Returns to College Education over
Time}\label{sec-college-over-time}

In this section I apply the methods proposed in this paper to produce
partial identification bounds on the labor market returns to attending
college in different settings in the United States. I first evaluate how
the labor market returns change over time.

For this application I use data from the American Community Survey
(ACS), a large scale annual data set collected by the US Census which
has data on educational attainment, income, and a set of background
characteristics. Five-year data files for 2003, 2008, 2013, 2018, and
2023 were collected from IPUMS \citep{ruggles2025}.

Because there is no apparent quasiexperiment that would allow for a
sharp identification of the returns to college separately in each year
in this data set, I instead partially identify the effect using
selection on observables. I first limit the data to individuals who have
completed high school, are not currently in the military, and have some
nonzero level of income. Then, the treatment variable is having attended
college for at least one year (not necessarily graduating).

Identification of the effect uses selection on observables, where income
is regressed on college attendance and a set of control variables.
Control variables are race (white / black / Asian / other), age, an
indicator for being hispanic, sex, veteran status, state, and an
indicator for speaking English.\footnote{Person-level survey weights are
  included.} There is almost certainly residual omitted variable bias
afer accounting for this set of controls, due to the omission of factors
like personality, parental socioeconomic status, and academic ability,
among other things.

To get a baseline set of partial identification bounds, I first apply
\citet{cinelli2019} separately in each year. While there are important
omitted variables, I believe I have included some of the most important
sources of endogeneity, and so I assume that the set of omitted
variables is no more than half as strong, as sources of bias, as the set
of included variables.\footnote{This section is intended as a
  demonstration of methods. The plausibility of any of the specific
  choices made for the purposes of partial identification could be
  debated.} This generates a set of ``original'' partial identification
bounds for the effect in each year.

Next, following the ``supershort'' method in
Section~\ref{sec-selecting-baseline-plausible-bounds}, I estimate the
effect of college attendance on income separately in each year with and
without any controls, calculate the change in the estimate as a result
of the removal of bias, and then calculate a matrix of proportionally
symmetric bounds on \(\rho^{jk}\) by seeing how the change in estimates
is related across settings. The table is shown in
Table~\ref{tbl-bounds-matrix}. Using these bounds, the partially
identified set for each year shrinks, since some effects in some years
are incompatible with possible values in other years, given the
\(\rho^{jk}\) bounds. I refer to these as ``univariate'' bounds.

\begin{table}

\caption{\label{tbl-bounds-matrix}Matrix of Bounds}

\centering{

 \centering \renewcommand*{\arraystretch}{1.1}
\begin{tabular}{llllll}
\hline
\hline
Year & 2003 & 2008 & 2013 & 2018 & 2023 \\ 
\hline
2003 &  & 1.312 & 1.008 & 1.210 & 1.557 \\ 
2008 & 0.762 &  & 1.301 & 1.586 & 2.042 \\ 
2013 & 0.992 & 0.769 &  & 1.219 & 1.569 \\ 
2018 & 0.827 & 0.630 & 0.820 &  & 1.287 \\ 
2023 & 0.642 & 0.490 & 0.637 & 0.777 & \\ 
\hline
\hline
\multicolumn{6}{l}{Upper-triangle and lower-triangle values show estimated upper and lower bounds on the proportional bias relationship.}\\ 
\end{tabular}

}

\end{table}%

The original and univariate bounds are shown in
Table~\ref{tbl-pib-year}. Notably, the addition of bounds between years
does not shrink the marginal partial identification bounds by much. Only
2013 has its range reduced. This result is consistent with
Section~\ref{sec-apx-marginal} whereby marginal sharpening only occurs
under certain conditions. The supershort \(\rho^{jk}\) bounds shown in
Table~\ref{tbl-bounds-matrix} are also fairly wide, indicating that
observed confounders resolve bias in similarly proportioned but not
tightly correlated ways across years. As such each part of the
identified range for most years is consistent with \emph{some} part of
the range of other years.

However, the impact of the bounds can be shown in the way that they
restrict the effects in different years to move together.
Figure~\ref{fig-pid-observed-bias-year} shows, on the left, the original
and univariate bounds. On the right I fix the 2003 estimate at the top
and bottom of its partial identification range, ``pinning'' it to those
values (implementing Definition \ref{def:pinning}), and show the
conditional identified ranges of the effect in other years. Here we see
a considerable reduction of plausible ranges for the estimates. These
estimates permit a wide range of partially identified effects in
general, but the joint restrictions show that, wherever the true value
is in that range, the assumptions made so far restrict the estimates in
each year to be at roughly similar levels, and identifies ranges that
show how the effect evolves over time.

\begin{table}

\caption{\label{tbl-pib-year}Partial Identification Bounds by Year}

\centering{

 \centering \renewcommand*{\arraystretch}{1.1}\resizebox{\textwidth}{!}{
\begin{tabular}{lp{.14\textwidth}p{.14\textwidth}p{.14\textwidth}p{.14\textwidth}p{.14\textwidth}}
\hline
\hline
Year & Estimate & New Lower Bound & New Upper Bound & Original Lower Bound & Original Upper Bound \\ 
\hline
2003 & 0.403 & 0.335 & 0.471 & 0.335 & 0.471 \\ 
2008 & 0.401 & 0.332 & 0.470 & 0.332 & 0.470 \\ 
2013 & 0.445 & 0.376 & 0.513 & 0.368 & 0.521 \\ 
2018 & 0.439 & 0.367 & 0.512 & 0.367 & 0.512 \\ 
2023 & 0.459 & 0.386 & 0.531 & 0.386 & 0.531\\ 
\hline
\hline
\end{tabular}
}

}

\end{table}%

\begin{figure}

\centering{

\pandocbounded{\includegraphics[keepaspectratio]{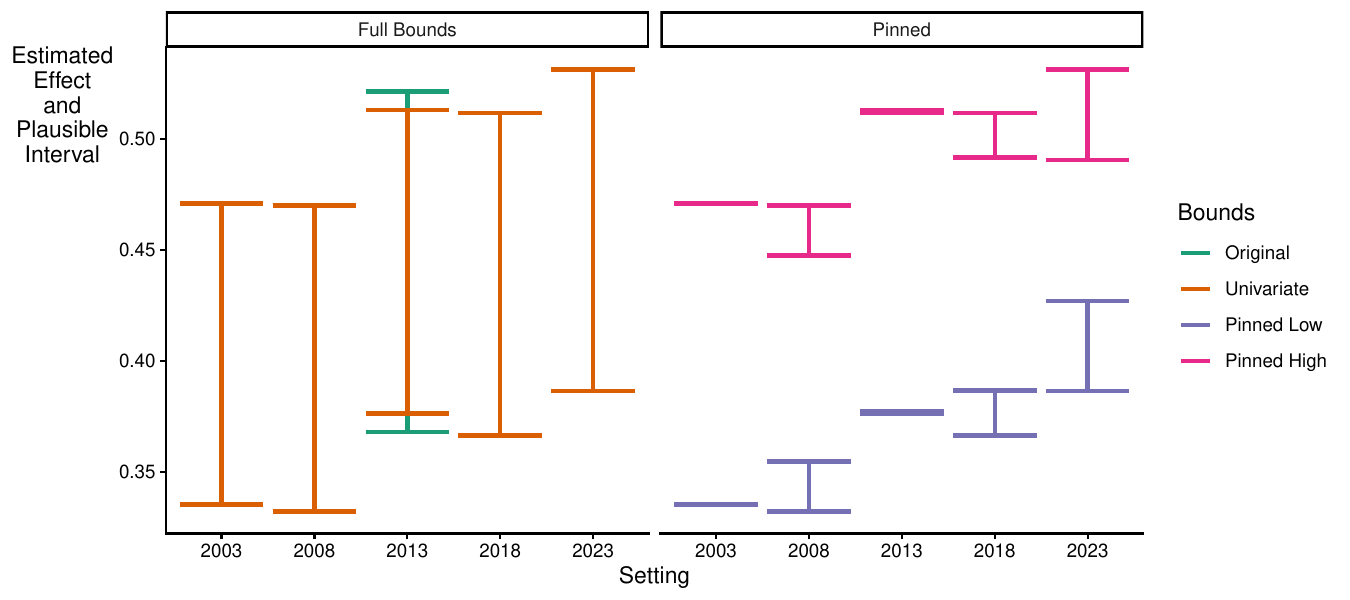}}

}

\caption{\label{fig-pid-observed-bias-year}Partial Identification Bounds
Using Observed Bias Relationships}

\end{figure}%

Figure~\ref{fig-pid-preset-year} repeats the analysis of
Figure~\ref{fig-pid-observed-bias-year}, but instead of constructing the
matrix of \(\rho^{jk}\) values using the supershort method and
observational data, it imposes a time-based maximum distance.
\(\rho^{jk}\) is set to be in the range \([.95,1/.95]\) for adjacent
5-year blocks, \([.95^2,1/(.95^2)]\) for effects that are two five-year
blocks away, and so on. The results are fairly similar, although
slightly more restrictive, given that the resulting bounds permit
smaller differences than under the supershort method, in this particular
application.

\begin{figure}

\centering{

\pandocbounded{\includegraphics[keepaspectratio]{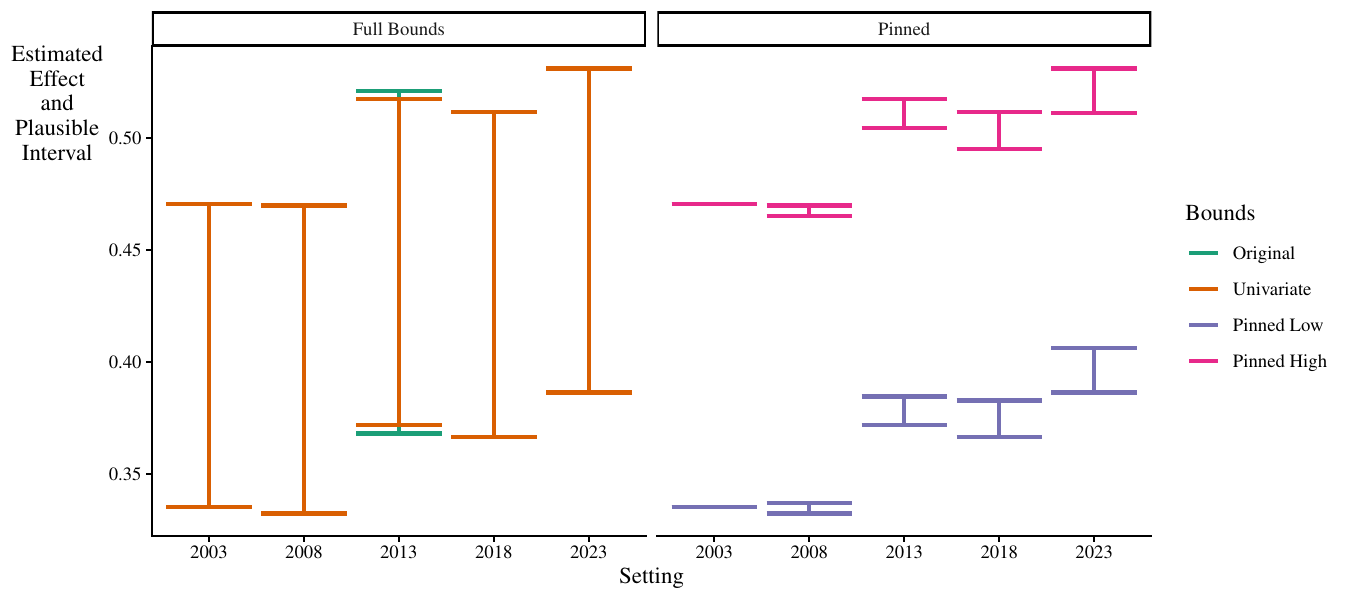}}

}

\caption{\label{fig-pid-preset-year}Partial Identification Bounds Using
Pre-Set .95 One-Year Bounds Restriction}

\end{figure}%

\section{The Returns to College Education by
Geography}\label{the-returns-to-college-education-by-geography}

In this section we will apply the same methodology as in the previous
section, but look at heterogeneity of effects over geography rather than
over time. In this context, the plausible assumptions on the bounds
between contexts is different.

Similar to in the previous section, I first construct a matrix of
proportionally symmetric \(\rho^{jk}\) bounds based on the
``supershort'' method. However, I only apply this restriction to
geographically adjacent states that share a border, setting no bound on
the bias relationship between non-adjacent states. Then, I impose an
adjacent-state set of restrictions, setting all \(\rho^{jk}\) values to
1.1, but only for adjacent states. States that are geographically
adjacent are restricted to \(\rho^{jk}=1.1\), and non-adjacent states
are set to have no restriction between
them.\footnote{This is just a demonstration of method; there are likely
  better assumptions to use here, either using the supershort method to
  get an average \(\rho^{jk}\) between adjacent states, or judging state
  similarity on a set of factors beyond geographic adjacency, also
  considering things like demographics.}\textsuperscript{,}\footnote{English-speaking
  status has been dropped as a covariate since it is perfectly collinear
  with other predictors in some of the states.}

Results are shown in Figure~\ref{fig-pid-state}. Like in
Section~\ref{sec-college-over-time}, the supershort method suggests
bounds that are too wide to impose any additional restrictions on the
partial identification bounds. However, the \(\rho^{jk}=1.1\) for
adjacent states restriction does meaningfully restrict partial
identification bounds. Notably, the bounds are narrowed most sharply for
states with the widest partial identification bounds on their estimates;
between-state restrictions add information and narrow bounds in these
cases where a state's own data leaves a wide range.

\begin{figure}

\centering{

\pandocbounded{\includegraphics[keepaspectratio]{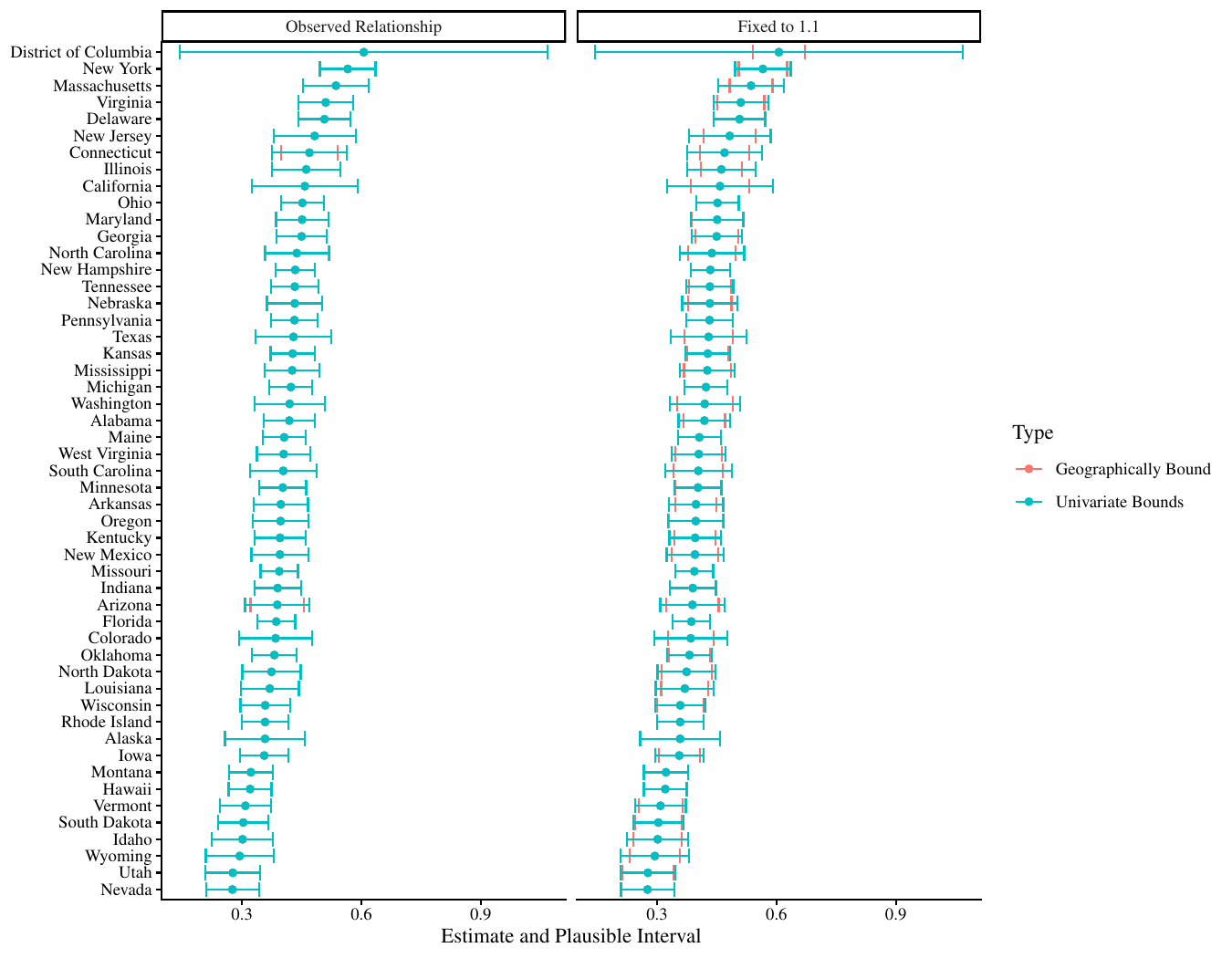}}

}

\caption{\label{fig-pid-state}Partial Identification Bounds for the
Effect of Education by State}

\end{figure}%

\section{Conclusion}\label{conclusion}

This paper proposes a method that allows researchers to restrict the
joint identified set for causal effect estimates across many different
settings, using the dependence structure between omitted variable biases
across settings. In many contexts, identification must often rely on a
conditional independence assumption that is unlikely to hold exactly,
and so partial identification is a more plausible approach to effect
estimation. The method in this paper allows researchers using partial
identification across settings to take advantage of the relationships
between those settings to narrow the joint identified set and provide a
more informative result. The paper's primary formal contribution,
established in Propositions \ref{prop:joint} through
\ref{prop:k-settings}, is a restriction on the joint identified polytope
\(\mathcal{J}\) for the vector of setting-specific effects.

This kind of approach is valuable because in many contexts, we are
interested in the heterogeneous effect of a treatment or policy, but do
not have a quasiexperimental design that applies in more than one
setting, or at least does not apply in all settings. Requiring only
quasiexperimental results restricts researchers to areas where we happen
to get lucky and find exogenous variation, and in multi-setting contexts
requires that that same exogenous variation functions repeatedly.

Partial identification with cross-setting restrictions is a principled
middle ground. Simply using control variables relies on an arguably
stronger assumption than making use of a quasiexperiment, but can also
be implemented in a much wider range of settings. When the cross-setting
proportionality assumptions are well-motivated, this is likely to lead
to a more representative and informative body of causal inference
results about effect heterogeneity.

Marginal identified sets for individual settings are not always
sharpened. Proposition \ref{prop:marginal} and Corollary
\ref{cor:symmetric} provide the formal conditions under which marginal
sharpening occurs. However, the joint set can be meaningfully sharpened
even when the marginal sets cannot. Then, the researcher can explore the
joint set. The ``pinning'' operation (Definition \ref{def:pinning})
leverages this joint restriction to answer practically important
questions about conditional heterogeneity: if the effect in one setting
is near the upper end of its identified range, what can be said about
other settings?

An important limitation of the current paper is the absence of
statistical inference. The identified sets \(\mathcal{I}^i\) and
\(\mathcal{J}\) are population objects and do not account for sampling
variability in \(\hat\theta^i_s\) or in the estimated bias bounds.
Extending the frameworks of \citet{imbens2004confidence} and
\citet{stoye2009more} to the multi-setting joint polytope setting is a
natural and important direction for future work.

\section{Statement on the Use of AI
Tools}\label{statement-on-the-use-of-ai-tools}

Assistance in coding and software development as well as improvements to
the mathematical formalization come from Claude Code using Opus 4.6. All
concepts were human-created, and a fully human-written draft from before
Claude made suggestions to improve the mathematical formalization is
available from the author.

\section*{References}\label{references}
\addcontentsline{toc}{section}{References}

\renewcommand{\bibsection}{}
\bibliography{References.bib}

\FloatBarrier
\newpage
\appendix
\renewcommand{\thesection}{\Alph{section}}
\setcounter{section}{0}
\counterwithin{figure}{section}
\counterwithin{table}{section}
\renewcommand{\section}[1]{\refstepcounter{section}%
\begin{Large}\bfseries Appendix \thesection: #1\end{Large}}
\renewcommand{\subsection}[1]{\refstepcounter{subsection}%
\begin{large}\bfseries \thesubsection: #1\end{large}}

\section{Marginal Sharpening Conditions}\label{sec-apx-marginal}

\begin{proof}
A value $\theta^j$ belongs to $\mathrm{proj}_j(\mathcal{J}^{jk})$ if and only if there exists $\theta^k \in \mathcal{I}^k$ such that $\theta^j - \theta^k \in \mathcal{I}^D_{jk} = [d_l, d_u]$, where $d_l = (\theta^j_s - \theta^k_s) - c^{jk}_u$ and $d_u = (\theta^j_s - \theta^k_s) - c^{jk}_l$.

The condition $\theta^j - \theta^k \in [d_l, d_u]$ is equivalent to $\theta^k \in [\theta^j - d_u,\; \theta^j - d_l]$. The intersection $[\theta^j - d_u, \theta^j - d_l] \cap \mathcal{I}^k$ is non-empty if and only if
\[
\theta^j - d_u \leq \theta^k_s - \nu^k_l \text{ and } \theta^k_s - \nu^k_u \leq \theta^j - d_l.
\]
Substituting $d_u = (\theta^j_s - \theta^k_s) - c^{jk}_l$ and $d_l = (\theta^j_s - \theta^k_s) - c^{jk}_u$ and simplifying:
\begin{align*}
\theta^j - d_u &= \theta^j - \theta^j_s + \theta^k_s + c^{jk}_l \leq \theta^k_s - \nu^k_l \\
&\Longleftrightarrow \theta^j \leq \theta^j_s - \nu^k_l - c^{jk}_l, \\[6pt]
\theta^k_s - \nu^k_u &\leq \theta^j - d_l = \theta^j - \theta^j_s + \theta^k_s + c^{jk}_u \\
&\Longleftrightarrow \theta^j \geq \theta^j_s - \nu^k_u - c^{jk}_u.
\end{align*}
Therefore $\mathrm{proj}_j(\mathcal{J}^{jk}) = \mathcal{I}^j \cap [\theta^j_s - \nu^k_u - c^{jk}_u,\; \theta^j_s - \nu^k_l - c^{jk}_l]$.

Strict sharpening of the lower bound occurs when $\theta^j_s - \nu^k_u - c^{jk}_u > \theta^j_s - \nu^j_u \Rightarrow \nu^j_u > \nu^k_u + c^{jk}_u$. Strict sharpening of the upper bound occurs when $\theta^j_s - \nu^k_l - c^{jk}_l < \theta^j_s - \nu^j_l \Rightarrow \nu^j_l < \nu^k_l + c^{jk}_l$. 

Intuitively, this says that marginal sharpening occurs whenever there is a parameter value in one setting that is farther away from the other setting's identified set than the widest possible difference between settings will allow.
\end{proof}

\begin{corollary}[No Marginal Sharpening with Equal Symmetric Bounds]
\label{cor:symmetric}
Suppose $\nu^j_l = -\nu^j_u = -\nu^{j*} < 0$ and $\nu^k_l = -\nu^k_u = -\nu^{k*} < 0$ (symmetric bias bounds) and also Assumption \ref{def:Csymmetry} applies such that the proportional bounds are symmetric and $\rho^{jk}_l=1/\rho^{jk}_u$. Then, $c^{jk}_u = (\rho^{jk}_u - 1)\nu^{k*}$ and $c^{jk}_l = -(\rho^{jk}_u - 1)\nu^{k*}$. The conditions for marginal sharpening (Proposition \ref{prop:marginal}) reduce to $\nu^{j*} > \rho^{jk}_u \nu^{k*}$. In particular, when both settings have equal bias bound magnitudes ($\nu^{j*} = \nu^{k*}$), strict marginal sharpening never occurs for any $\rho^{jk}_u > 1$.
\end{corollary}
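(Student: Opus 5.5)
The plan is a direct computation: specialize Definition \ref{def:Csymmetry} to symmetric bias bounds, find $c^{jk}_l$ and $c^{jk}_u$ in closed form, and substitute them into the two conditions of Proposition \ref{prop:marginal}. Write $r = \rho^{jk}_u \geq 1$; the corollary's case of interest is $r>1$. Set $\nu^k_l = -\nu^{k*}$ and $\nu^k_u = \nu^{k*}$ with $\nu^{k*}>0$.

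\emph{Step 1: the elements of $\mathbb{C}^{jk}$.} Under Definition \ref{def:Csymmetry} the four candidate extremes become
\[
\frac{(r-1)\nu^{k*}}{r},\quad -(r-1)\nu^{k*},\quad -\frac{(r-1)\nu^{k*}}{r},\quad (r-1)\nu^{k*}.
\]
Since $r\geq 1$ and $\nu^{k*}>0$, we have $r-1\geq 0$ and $1/r\leq 1$. Hence $(r-1)\nu^{k*}\geq (r-1)\nu^{k*}/r\geq 0$, and symmetrically for the two negative elements. So $c^{jk}_u=(r-1)\nu^{k*}$ and $c^{jk}_l=-(r-1)\nu^{k*}$, which is the first claim. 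The only care needed is ordering the four candidates, and the sign facts $r-1\ge0$ and $1/r\le 1$ settle it.

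\emph{Step 2: reduce the sharpening conditions.} Substitute into condition 1 of Proposition \ref{prop:marginal}:
\[
\nu^k_u+c^{jk}_u=\nu^{k*}+(r-1)\nu^{k*}=r\nu^{k*},
\]
so condition 1 reads $r\nu^{k*}<\nu^{j*}$. For condition 2,
\[
\nu^k_l+c^{jk}_l=-\nu^{k*}-(r-1)\nu^{k*}=-r\nu^{k*},
\]
and the requirement $-r\nu^{k*}>-\nu^{j*}$ is again $\nu^{j*}>r\nu^{k*}$. The two conditions therefore coincide. By the "if and only if" in Proposition \ref{prop:marginal}, marginal sharpening occurs exactly when $\nu^{j*}>\rho^{jk}_u\nu^{k*}$.

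\emph{Step 3: equal magnitudes.} If $\nu^{j*}=\nu^{k*}=\nu^*>0$, the condition becomes $\nu^*>r\nu^*$, i.e.\ $r<1$. This is impossible for any $r\geq 1$, and in particular for $r>1$, so no strict marginal sharpening occurs.

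I do not expect a genuine obstacle. The steps where an error is most likely are the ordering of $\mathbb{C}^{jk}$ in Step 1 and applying Proposition \ref{prop:marginal} with the correct sign convention, since $\nu_l$ enters negatively.
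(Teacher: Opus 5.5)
Your proof is correct and follows the same route as the paper: specialize Definition~\ref{def:Csymmetry} to symmetric bounds, read off $c^{jk}_u=(\rho^{jk}_u-1)\nu^{k*}$ and $c^{jk}_l=-(\rho^{jk}_u-1)\nu^{k*}$, and substitute these into the two conditions of Proposition~\ref{prop:marginal}. You handle condition 2 correctly by keeping the signs in $\nu^j_l=-\nu^{j*}$ and $\nu^k_l=-\nu^{k*}$, so it collapses to the same inequality $\nu^{j*}>\rho^{jk}_u\nu^{k*}$; the paper's written proof drops those signs and gets $\nu^{j*}<\nu^{k*}(2-\rho^{jk}_u)$ instead, and while the equal-magnitude conclusion holds either way, only your computation justifies the stated reduction to a single condition.
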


\begin{proof}
With $\nu^k_l = -\nu^{k*}$, $\nu^k_u = \nu^{k*}$, and $\rho^{jk}_l=1/\rho^{jk}_u$, the elements of $\mathbb{C}^{jk}$ are
$\{-\frac{1-\rho^{jk}_u}{\rho^{jk}_u}\nu^{k*}, (1-\rho^{jk}_u)\nu^{k*}, \frac{1-\rho^{jk}_u}{\rho^{jk}_u}\nu^{k*}, (\rho^{jk}_u-1)\nu^{k*}\}$.
Since $\rho^{jk}_u > 1$ and $\bar{\nu}^k > 0$, the maximum is $c^{jk}_u = (\rho^{jk}_u-1)\nu^{k*}$ and the minimum is $c^{jk}_l = (1-\rho^{jk}_u) \nu^{k*}$. 

The lower-bound sharpening condition becomes $ \nu^{j*} >  \nu^{k*} + (\rho^{jk}_u - 1) \nu^{k*} = \rho^{jk}_u  \nu^{k*}$, which fails when $ \nu^{j*} =  \nu^{k*}$ since this implies $1 > \rho^{jk}_u$, when by assumption we have $\rho^{jk}_u > 1$. The upper-bound sharpening condition becomes $\nu^{j*} < \nu^{k*} + (1-\rho^{jk}_u) \nu^{k*}=\nu^{k*}(2-\rho^{jk}_u)$. This similarly fails when $ \nu^{j*} =  \nu^{k*}$ since this also implies $\rho^{jk}_u < 1$.

\end{proof}

\end{document}